\documentclass[11pt]{amsart}
\usepackage{mathrsfs}
\usepackage{amssymb} 
\usepackage{url}
\usepackage{hyperref}
\usepackage{color}
\definecolor{refcolor}{RGB}{0,0,190}
\hypersetup{
    colorlinks,
    citecolor=refcolor,
    filecolor=refcolor,
    linkcolor=refcolor,
    urlcolor=refcolor
}
\usepackage{array}
\usepackage{booktabs}
\setlength{\heavyrulewidth}{1.5pt}
\setlength{\abovetopsep}{4pt}

\theoremstyle{definition}
\newtheorem{theorem}{Theorem}[section]
\newtheorem{definition}[theorem]{Definition}


\def\({\left(}
\def\){\right)}

\newcommand{\R}{\mathbb{R}}

\newcommand{\de}{\textnormal{d}}

\newcommand{\tn}{\textnormal}
\newcommand{\ds}{\displaystyle}

\newcommand{\ie}{\textit{i.e.} }

\newcommand{\eg}{\textit{e.g.} }

\newcommand{\citep}[2]{\cite{#1}, p. #2}

\newcommand{\Ric}{\textnormal{Ric}}

\newcommand{\mc}[1]{\mathcal{#1}}

\newcommand{\sref}[1]{\S\ref{#1}}

\newcommand{\dsfrac}[2]{\ds{\frac{#1}{#2}}}

\newcommand{\metric}[1]{\langle#1\rangle}
\newcommand{\kosz}{\mc K}

\newcommand{\cocontr}{{{}_\bullet}}

\newcommand{\idxannih}[2]{#1{}^{#2}{}}
\newcommand{\idxcoannih}[2]{#1{}_{#2}{}}

\newcommand{\radix}[1]{\idxcoannih{#1}{\circ}}
\newcommand{\annih}[1]{\idxannih{#1}{\bullet}}
\newcommand{\coannih}[1]{\idxcoannih{#1}{\bullet}}

\newcommand{\annihg}{\coannih{g}}

\newcommand{\IM}{\tn{im }}

\hyphenation{Rie-mann-ian}

\def\hyph{-\penalty0\hskip0pt\relax}
\newcommand{\semiriem}{semi{\hyph}Riemannian}

\newcommand{\semireg}{semi{\hyph}regular}
\newcommand{\ssemireg}{Semi{\hyph}regular}
\newcommand{\quasireg}{quasi{\hyph}regular}

\newcommand{\nondeg}{non{\hyph}degenerate}

\newcommand{\flrw}{Friedmann-Lema\^itre-Robertson-Walker}
\newcommand{\FLRW}{FLRW}
\newcommand{\schw}{Schwarzschild}

\newcommand{\Wch}{Weyl curvature hypothesis}
\newcommand{\WCH}{WCH}

\textwidth 6.4in   
\textheight 8.7in
\evensidemargin 0.0in   
\oddsidemargin 0.0in   
\topmargin 0.0in   
\headsep 0in   
\headheight 0in   
\footskip .5in   
\pagestyle{plain}   
\pagenumbering{arabic}   

\begin{document} 
 
\title{On the Weyl Curvature Hypothesis\footnote{Annals of Physics}}
\author{Ovidiu Cristinel \ Stoica}
\date{July 5, 2013}

\begin{abstract}
The Weyl curvature hypothesis of Penrose attempts to explain the high homogeneity and isotropy, and the very low entropy of the early universe, by conjecturing the vanishing of the Weyl tensor at the Big-Bang singularity.

In previous papers it has been proposed an equivalent form of Einstein's equation, which extends it and remains valid at an important class of singularities (including in particular the Schwarzschild, FLRW, and isotropic singularities). Here it is shown that if the Big-Bang singularity is from this class, it also satisfies the Weyl curvature hypothesis.

As an application, we study a very general example of cosmological models, which generalizes the FLRW model by dropping the isotropy and homogeneity constraints. This model also generalizes isotropic singularities, and a class of singularities occurring in Bianchi cosmologies. We show that the Big-Bang singularity of this model is of the type under consideration, and satisfies therefore the Weyl curvature hypothesis.

\bigskip
\noindent 
\keywords{Singularities in general relativity, Weyl curvature hypothesis, Big Bang singularity, Einstein equation}
\end{abstract}


\maketitle

\setcounter{tocdepth}{2}
\tableofcontents

\section{Introduction}
\label{s_intro}

\subsection{The \Wch}
\label{s_wch}

In his seminal paper \cite{Pen79}, R. Penrose was interested in several distinct problems, including the search for an explanation of the second law of Thermodynamics, and of the high homogeneity and isotropy of the universe. His analysis of the flow of energy in the Universe led him to the idea that the second law of Thermodynamics is due to very high homogeneity around the Big-Bang. Penrose explains this homogeneity by the following argument (\cite{Pen79}, p. 614):

\begin{quote}
In terms of spacetime curvature, the absence of clumping corresponds, very roughly, to the absence of Weyl conformal curvature (since absence of clumping implies spatial-isotropy, and hence no gravitational principal null-directions).
\end{quote}

He further added ``this restriction on the early geometry should be something like: the Weyl curvature $C_{abcd}$ vanishes at any initial singularity'' (\cite{Pen79}, p. 630). This is the \textit{\Wch} (\WCH), and it makes the central point of interest in the present article.

In addition to Penrose's motivations for the {\WCH}, other reasons come from Quantum Gravity. We expect that, near the Big-Bang, the quantum effects of gravity take over, but we know that gravity is perturbatively nonrenormalizable at two loops \cite{HV74qg,GS86uvgr}. A vanishing Weyl tensor would mean a vanishing of local degrees of freedom, hence of gravitons, and this would remove some of the problems  of Quantum Gravity, at least at the Big-Bang \cite{Car95}.

The Weyl tensor $C_{abcd}$ is the traceless part of the Riemann curvature tensor $R_{abcd}$. From gravitational viewpoint, it is responsible for the tidal forces. The tensor $C_{abc}{}^d$ is invariant at the conformal rescaling $g_{ab}\mapsto\Omega^2 g_{ab}$. If it vanishes, it indicates that the metric is conformally flat (in dimension $\geq 4$; in lower dimension it vanishes trivially).

An important class of singularities which automatically satisfy a version of the {\WCH} (stating that the Weyl curvature remains finite at singularity) are the \textit{isotropic singularities}. They were researched by Tod \cite{Tod87,Tod90,Tod91,Tod92,Tod02,Tod03}, Claudel \& Newman \cite{CN98}, Anguige \& Tod \cite{AT99i,AT99ii}. The metric tensor in this case can be obtained by a conformal rescaling from a regular (\ie non-singular) metric tensor, and presents nice behavior from conformal geometric viewpoint. Because the Weyl tensor $C_{abc}{}^d$ is invariant at conformal rescaling, the main feature of the isotropic singularities is that the Weyl tensor equals that of a regular metric, hence remaining finite. If we apply a conformal rescaling to a metric tensor $g_{ab}$, the new metric $\hat{g}_{ab}:=\Omega^2g_{ab}$ has the Weyl curvature tensor $\hat{C}_{abc}{}^d = C_{abc}{}^d$, which is smooth, but not necessarily vanishing at $\Omega=0$. But $\hat{C}_{abcd}$ vanishes, since
\begin{equation}
	\hat{C}_{abcd}=\hat{g}_{sd}\hat{C}_{abc}{}^s=\Omega^2 g_{sd}C_{abc}{}^s=0.
\end{equation}

A simple example of vanishing Weyl tensor is provided by the {\flrw} ({\FLRW}) cosmological model. This model implements the \textit{cosmological principle} that the Universe is, at very large scales, homogeneous and isotropic. But the fact that the {\FLRW} model has vanishing Weyl curvature is irrelevant for the {\WCH}, because it is too symmetric, and its Weyl tensor vanishes identically.

In this paper we will prove that a much larger class of singularities, which includes among others the isotropic singularities and the {\FLRW} ones, satisfies the {\Wch}. This larger class is not a random generalization of the isotropic singularities, but it appeared from our research on a different problem. Previous considerations led us to the idea that, for a singularity to be still manageable from mathematical viewpoint, some conditions are in order \cite{Sto11a,Sto11b,Sto12e}. These conditions allowed the definition of a smooth Riemann curvature $R_{abcd}$ (but not of $R_{abc}{}^d$) at singularities. This program successfully led to a better understanding of the black hole singularities (see \eg \cite{Sto11a,Sto11b,Sto12e,Sto11e,Sto11f}).

Further considerations suggested that the Ricci decomposition of the curvature tensor should be smooth, in order to allow the writing of an equation which extends Einstein's at a large class of singularities, but is equivalent to it for {\nondeg} metrics \cite{Sto12a}. These singularities behave, mathematically, as if the metric loses one or more dimensions, and consequently as if the cotangent space loses one or more dimensions. They have, as a bonus, the property that $C_{abcd}$ vanishes, as we will prove in \sref{s_wch_thm}. The proof is largely based on the idea that $C_{abcd}=0$ in lower dimensions.

As an application, in \sref{s_wch_ex} we will study a very general cosmological model, which does not assume, like the {\FLRW} model, that the space slices have the metric constant in time, up to an overall scaling factor $a^2(t)$ which depends on the time only and which vanishes at the Big-Bang. In our model we will keep the overall scaling factor $a^2(t)$, but we will allow the space part of the metric to change in time freely. The physical motivation is that in the actual Universe there is no perfect isotropy and homogeneity, and in fact at small scales the inhomogeneity is important. Working in such general settings, we will not be concerned at this time with the matter content of this universe. Because the solution is very general, the possibilities of matter fields which can give this kind of metric are limitless. This metric has non-vanishing Weyl tensor in general, but we will show that at the Big-Bang singularities $C_{abcd}$ vanishes.

\subsection{{\ssemireg} spacetimes}
\label{s_sreg}

We were initially interested in the general study of metrics which can become degenerate or change their signature. We arrived at a large class of metrics with singularities, for which the Riemann curvature tensor $R_{abcd}$ can be defined and remains smooth. The main obstruction is, of course, the fact that when the metric $g_{ab}\to 0$, its reciprocal $g^{ab}\to\infty$. This prevents standard tensor operations like index raising, and the contraction between covariant indices. Consequently, we cannot define the Levi-Civita connection and its curvature. But we found a large class of singular {\semiriem} manifolds on which we can define covariant contraction, and eventually construct directly the Riemann curvature tensor $R_{abcd}$, which turns out to be smooth (for this section, please refer to \cite{Sto11a,Sto11b,Sto12e}). To do this, we avoided working with $\nabla_XY$, which cannot be defined, and used instead as much as possible the \textit{Koszul object}, defined (see \eg \citep{Kup87b}{263}, and \cite{Sto11a,Sto11b}) as the right side of the \textit{Koszul formula} (see \eg \citep{ONe83}{61}):
\begin{equation}
\label{eq_Koszul_form}
	\kosz(X,Y,Z) :=\ds{\frac 1 2} \{ X \metric{Y,Z} + Y \metric{Z,X} - Z \metric{X,Y} 
	- \metric{X,[Y,Z]} + \metric{Y, [Z,X]} + \metric{Z, [X,Y]}\}
\end{equation}
which helps, for a {\nondeg} metric, defining the Levi-Civita connection implicitly by
\begin{equation}
	\metric{\nabla_XY,Z}=\kosz(X,Y,Z).
\end{equation}
When the metric becomes degenerate we cannot construct the Levi-Civita connection.

We define the Riemann curvature $R_{abcd}$ even for {\nondeg} metrics, in a way which avoids the undefined $\nabla_XY$, but relies on the defined and smooth $\kosz(X,Y,Z)$.

To be able to define the Riemann curvature in terms of the Koszul object only, the differential $1$-form $\kosz(X,Y,\_)$ has to admit covariant derivative, which has to be smooth. This is equivalent to requiring that
\begin{enumerate}
	\item
	$\kosz(X,Y,W)=0$ whenever $W$ satisfies $\metric{W,X}=0$ for any $X$
	\item
	the contraction $\kosz(X,Y,\cocontr)\kosz(Z,T,\cocontr)$ is smooth for any local vector fields $X,Y,Z,T$.
\end{enumerate}

A \textit{{\semireg} manifold} is defined as a singular {\semiriem} manifold satisfying the above conditions. Its metric is called \textit{{\semireg} metric}.
A \textit{{\semireg} spacetime} is a $4$-dimensional {\semireg} manifold $M$ with metric $g$ which 
\begin{enumerate}
	\item 
has the signature at each point $(r,s,t)$, $s\leq 3$, $t\leq 1$ (in other words, the metric can be diagonalized at each point to a matrix of the form $\tn{diag}(\underbrace{0,\ldots,0}_{r\tn{ times}},\underbrace{1,\ldots,1}_{s\tn{ times}},\underbrace{-1,\ldots,-1}_{t\tn{ times}})$). 
	\item 
	and is {\nondeg} on a dense subset of $M$. In fact, for the purpose of black hole and big-bang singularities, this condition is too general. The known singularities are submanifolds of dimension $<4$, so we can safely replace this condition with the stronger condition that the metric is regular on the entire manifold, with the exception of a finite (or at most countable) union of lower-dimensional manifolds.
\end{enumerate}

We defined the Riemann curvature $R_{abcd}$ by
\begin{equation}
\label{eq_riemann_curvature_tensor_coord}
	R_{abcd}= \partial_a \Gamma_{bcd} - \partial_b \Gamma_{acd} + \Gamma_{ac\cocontr}\Gamma_{bd\cocontr} - \Gamma_{bc\cocontr}\Gamma_{ad\cocontr}
\end{equation}
where $\Gamma_{abc}=\kosz(\partial_a,\partial_b,\partial_c)$ are Christoffel's symbols of the first kind. This is smooth for {\semireg} manifolds.

It is easy to see that from the smoothness of $R_{abcd}$ follows the smoothness of the Einstein tensor density $G_{ab}\det g$, leading to the possibility of writing a densitized version of Einstein's equation, which remains smooth at {\semireg} singularities \cite{Sto11a}.

But there is another way to write an equivalent form of Einstein's equation, which we introduced in \cite{Sto12a}. It leads to an \textit{expanded Einstein equation}, obtained from Einstein's equation by taking the Kulkarni-Nomizu product (section \sref{s_einstein_exp_qreg}, eqn. \eqref{eq_kulkarni_nomizu}). This leads to a special class of {\semireg} spacetimes, named {\quasireg} spacetimes, which allow the formulation of the expanded Einstein equation. As it turned out, important classes of singularities are {\quasireg}: the isotropic singularities, the {\schw} singularity, $1+3$ degenerate warped products, in particular the {\FLRW} Big-Bang singularity. As we will see, the quasiregular singularities satisfy the {\WCH}.

\subsection{Expanded Einstein equation and {\quasireg} spacetimes}
\label{s_einstein_exp_qreg}

The standard Einstein equation is
\begin{equation}
\label{eq_einstein}
	G_{ab} + \Lambda g_{ab} = \kappa T_{ab}
\end{equation}
where $\Lambda$ is the \textit{cosmological constant}, and $\kappa:=\dsfrac{8\pi \mc G}{c^4}$, $\mc G$ and $c$ being the gravitational constant and the speed of light. 
By
\begin{equation}
\label{eq_einstein_tensor}
	G_{ab}:=R_{ab}-\frac 1 2 R g_{ab},
\end{equation}
we denoted the Einstein tensor, obtained from the \textit{Ricci curvature} $R_{ab} := g^{st}R_{asbt}$ and the \textit{scalar curvature} $R := g^{st}R_{st}$.
Einstein's equation \eqref{eq_einstein} makes sense so long as the metric is not singular.

One situation when the metric becomes singular is by becoming degenerate -- \ie its determinant becomes $0$. Then, the inverse $g^{ab}$ cannot be constructed, and Einstein's equation can no longer be written.

The main ingredient we utilized to smoothen the Einstein equation at such singularities is the \textit{Kulkarni-Nomizu product} of two symmetric bilinear forms $h$ and $k$, defined as
\begin{equation}
\label{eq_kulkarni_nomizu}
	(h\circ k)_{abcd} := h_{ac}k_{bd} - h_{ad}k_{bc} + h_{bd}k_{ac} - h_{bc}k_{ad}.
\end{equation}
With its help, in \cite{Sto12a} we introduced a new version of Einstein's equation, called \textit{the expanded Einstein equation}:
\begin{equation}
\label{eq_einstein_expanded}
	(G\circ g)_{abcd} + \Lambda (g\circ g)_{abcd} = \kappa (T\circ g)_{abcd}.
\end{equation}

The new equation is equivalent to the standard Einstein equation so long as the metric is not singular, but it can be written, and remains smooth, even in cases when the metric becomes degenerate. When the metric becomes degenerate, it vanishes for some directions in spacetime. The inverse of the metric becomes singular in these directions, making the Einstein tensor $G_{ab}$ singular, but the Kulkarni-Nomizu product $(G\circ g)_{abcd}$ contains additional factors $g_{ab}$, which compensate the singularities of the metric's inverse, so that $(G\circ g)_{abcd}$ is smoothened \cite{Sto12a}.

The tensor $S_{abcd}$ is the scalar part of the Riemann curvature, and $E_{abcd}$ the \textit{semi-traceless part} of the Riemann curvature. They are defined (see \eg \cite{ST69,BESS87,GHLF04}), for a {\semiriem} manifold of dimension $n$, by
\begin{equation}
	S_{abcd} = \dsfrac{1}{2n(n-1)}R(g\circ g)_{abcd},
\end{equation}
\begin{equation}
	E_{abcd} = \dsfrac{1}{n-2}(S \circ g)_{abcd},
\end{equation}
where 
\begin{equation}
\label{eq_ricci_traceless}
S_{ab} := R_{ab} - \dsfrac{1}{n}Rg_{ab}.
\end{equation}

The \textit{Weyl curvature tensor} is defined by
\begin{equation}
\label{eq_weyl_curvature}
	C_{abcd} = R_{abcd} - S_{abcd} - E_{abcd}.
\end{equation}

The expanded Einstein equation can be rewritten in terms of $E_{abcd}$ and $S_{abcd}$:
\begin{equation}
\label{eq_einstein_expanded_explicit}
	2 E_{abcd} - 6 S_{abcd} + \Lambda (g\circ g)_{abcd} = \kappa (T\circ g)_{abcd}.
\end{equation}

In \cite{Sto12a} we saw that our version of Einstein's equation remains smooth at the FLRW Big-Bang singularity, isotropic singularities, $1+3$ warped products with degenerate warping function, and even at the {\schw} black hole singularity \cite{Sto11e}.

This motivated the introduction of {\quasireg} spacetimes in \cite{Sto12a}.
\begin{definition}
\label{def_quasi_regular}
We say that a {\semireg} manifold $(M,g_{ab})$ is \textit{{\quasireg}}, and that $g_{ab}$ is a \textit{{\quasireg} metric}, if:
\begin{enumerate}
	\item 
$g_{ab}$ is {\nondeg} on a subset $M_0\subset M$ dense in $M$ (so that $\overline {M_0}=M$)
	\item 
	the tensors $S_{abcd}$ and $E_{abcd}$ defined at the points $p\in M_0$ where the metric is {\nondeg} extend smoothly to the entire manifold $M$.
\end{enumerate}
If a {\semireg} spacetime is also a {\quasireg} manifold, we call it \textit{{\quasireg} spacetime}.
\end{definition}

From mathematical viewpoint, the condition that $\overline {M_0}=M$ is required to allow the extension by continuity of the tensors $S_{abcd}$ and $E_{abcd}$ at the points where the metric is {\nondeg}. From physical viewpoint this condition is not limiting, because we do not expect the metric to be degenerate on four-dimensional volumes. We do not know of examples (even in theory) of a metric which is degenerate on a four-dimensional volume, and in fact, such a degeneracy would extend to the entire spacetime, because of the field equations.

The central result of this paper is that the {\quasireg} singularities satisfy the {\Wch}.

\section{The Weyl curvature vanishes at {\quasireg} singularities}
\label{s_wch_thm}

We will now prove that the Weyl curvature tensor $C_{abcd}$ vanishes at {\quasireg} singularities. Its smoothness entails that $C_{abcd}\to 0$ as approaching such singularities. On the other hand, it is allowed to increase indefinitely away from the singularity. In fact, the condition that the metric is {\quasireg} introduces constraints only at the singularities, since at the points where the metric is regular, it is automatically {\quasireg}. In particular, the Weyl tensor vanishes at a {\quasireg} Big-Bang, but can increase as the universe subsequently evolves, as required by the \Wch.

\begin{theorem}
The Weyl curvature tensor $C_{abcd}$ vanishes at {\quasireg} singularities.
\end{theorem}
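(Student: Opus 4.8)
The plan is to exploit the Ricci decomposition $R_{abcd} = S_{abcd} + E_{abcd} + C_{abcd}$ together with the two smoothness inputs already available. Since $M$ is {\semireg}, the Riemann tensor $R_{abcd}$ is smooth, and since $M$ is {\quasireg}, the tensors $S_{abcd}$ and $E_{abcd}$ extend smoothly to all of $M$. Hence $C_{abcd} = R_{abcd} - S_{abcd} - E_{abcd}$ is smooth on $M$, and because $M_0$ is dense its value at a singular point $p$ is the limit of its values along $M_0$. On $M_0$ the tensor $C_{abcd}$ is the genuine four-dimensional Weyl tensor, so it is totally trace-free there; the whole problem therefore reduces to determining what total tracelessness forces in the degenerate limit.

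Next I would set up an adapted frame at $p$. Writing the signature at $p$ as $(r,s,t)$ with $r\geq 1$, I diagonalize $g_{ab}(p)$ so that the first $r$ directions span the radical and the remaining $\bar n := s+t = 4-r \leq 3$ directions carry a {\nondeg} block. Splitting indices into radical and non-radical ones, I would expand the vanishing traces $g^{ac}C_{abcd}=0$ (and its analogues under the symmetries of $C$) in this frame. The decisive feature is that the inverse metric blows up precisely in the radical directions, while $C_{abcd}$ stays bounded by smoothness. I expect this to force, in the limit $q\to p$, two facts: (a) every component of $C_{abcd}(p)$ carrying at least one radical index vanishes, i.e. $C_{abcd}(p)$ is radical-annihilating; and (b) the surviving all-non-radical components are trace-free with respect to the {\nondeg} block metric.

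The conclusion is then purely algebraic and lower-dimensional. By (a), $C_{abcd}(p)$ lives entirely on the $\bar n$-dimensional non-radical block; by (b) it is a trace-free algebraic curvature tensor on that block. A trace-free algebraic curvature tensor vanishes identically in dimension $\bar n \leq 3$ --- equivalently, the Weyl tensor is trivial in dimension $\leq 3$, which is exactly the idea flagged in the introduction --- so $C_{abcd}(p)=0$, as asserted.

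The main obstacle is steps (a) and (b): converting total tracelessness into radical-annihilation and block-tracelessness in the degenerate limit. The difficulty is that the contraction expands as $g^{ac}C_{abcd} = \sum_\alpha \epsilon_\alpha C_{\alpha b\alpha d} + \sum_i (1/\lambda_i)\,C_{ibid}$, mixing a finite non-radical sum with radical terms whose eigenvalues satisfy $\lambda_i\to 0$; merely knowing that $C_{abcd}$ is bounded does not pin down these indeterminate products. Controlling them requires the full strength of the {\quasireg} hypothesis, since the smoothness of $S_{abcd}$ and $E_{abcd}$ is precisely what constrains the rate at which the Ricci and scalar curvatures may diverge, and hence the rate at which the radical components of $C_{abcd}$ must decay relative to the $\lambda_i$. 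I would therefore carry out this rate analysis carefully, using the explicit formulas for $S_{abcd}$ and $E_{abcd}$, as the technical heart of the proof, and sanity-check it against the {\FLRW} and degenerate warped-product cases, where $\bar n \leq 1$ makes the reduction transparent.
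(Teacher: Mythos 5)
Your proposal shares its skeleton with the paper's proof --- smoothness of $C_{abcd}$ from the Ricci decomposition, followed by the algebraic fact that a tensor with Weyl symmetries vanishes in dimension $\leq 3$ --- but the bridge between these two pieces, your step (a) asserting that every component of $C_{abcd}(p)$ carrying a radical index vanishes, is a genuine gap. You yourself identify the problem precisely: taking the limit of $g^{ac}C_{abcd}=0$ along $M_0$ produces indeterminate products of the form $(1/\lambda_i)\,C_{ibid}$ with $\lambda_i\to 0$, which boundedness of $C_{abcd}$ cannot resolve. Your plan to close this by a ``rate analysis'' of how fast the Ricci and scalar curvatures may diverge is never carried out, and there is no evidence it can be carried out in the stated generality; as written, step (a) is an announced intention rather than an argument, and it is the entire content of the theorem.

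The paper closes this gap by a different and much cheaper mechanism, which is the one input you are missing. It invokes the structural result (from the cited work on {\semireg} manifolds) that $\left(R_{abcd}\right)_p \in \otimes^4 \annih{T}_pM$ pointwise, where $\annih{T}_pM$ is the radical annihilator, of dimension $<4$ wherever $g$ degenerates. This is not extracted from a limiting trace identity: it comes from the definition of semi-regularity itself, whose first condition is that the Koszul object $\kosz(X,Y,W)$ vanishes whenever $W$ lies in the radical $\radix{T}_pM$; and since $g_{ab}$ annihilates the radical by construction, the Kulkarni--Nomizu building blocks of $S_{abcd}$ and $E_{abcd}$ lie in $\otimes^4 \annih{T}_pM$ as well. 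Hence $C_{abcd}=R_{abcd}-S_{abcd}-E_{abcd}$ is an exactly radical-annihilating tensor at $p$ --- your (a) holds as a pointwise algebraic identity, with no competition of divergence rates --- and it carries the canonical {\nondeg} metric $\annihg$ on the at-most-three-dimensional space $\annih{T}_pM$, so the dimension-$\leq 3$ vanishing applies directly. If you replace your proposed rate analysis with this containment, your argument becomes the paper's.
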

\begin{proof}
From the smoothness of $R_{abcd}$, $E_{abcd}$, $S_{abcd}$, and from equation \eqref{eq_weyl_curvature}, follows the smoothness of $C_{abcd}$.

The following considerations use objects described in \cite{Sto11a,Sto11b,Sto12e}, but we try to make the proof as self-contained as possible. Since the metric $g$ is a bilinear form on the tangent vector space $T_pM$, it defines, at the points $p$ where is degenerate, the totally degenerate space $\radix{T}_pM:=T_pM^\perp$, named the \textit{radical} of $T_pM$. We construct its annihilator
\begin{equation}
	\annih{T}_pM:=\left\{\omega\in T^*_pM;\omega|_{\radix{T}_pM}=0\right\},
\end{equation}
named the \textit{radical annihilator}. The radical annihilator is the image of the index lowering morphism $\flat:T_pM\to T^*_pM$, $X^\flat(Y):=\metric{X,Y}$, $\forall X,Y\in T_pM$:
\begin{equation}
\annih{T}_pM=\IM\flat\leq T^*_pM.
\end{equation}
The dual of the radical annihilator, $(\annih{T}_pM)^*$, is isomorphic with the quotient $T_pM/\radix{T}_pM$. On $\annih{T}_pM$, $g$ induces a canonical {\nondeg} metric defined by
\begin{equation}
\annihg(\omega,\tau):=\metric{X,Y}
\end{equation}
where $X^\flat=\omega$ and $Y^\flat=\tau$. This metric is used for covariant contractions, so long as the contracted tensor components live in $\annih{T}_pM$.

As it is shown in \cite{Sto11a}, the Riemann curvature tensor $R_{abcd}$ of a {\semireg} {\semiriem} manifold satisfies at each $p\in M$
\begin{equation}
\left(R_{abcd}\right)_p \in \otimes^4 \annih{T}_pM.
\end{equation}
At the points $p$ where the metric $g$ is degenerate, $\dim\left(\annih{T}_pM\right)<4$.
Since in dimension $\leq 3$ any tensor having the symmetries of the Weyl tensor vanishes (see \eg \cite{BESS87}), it follows that 
\begin{equation}
\left(C_{abcd}\right)_p=0
\end{equation}
whenever $g$ is degenerate at $p$. This concludes the proof.
\end{proof}

\section{Example: a general, non-isotropic and inhomogeneous cosmological model}
\label{s_wch_ex}

The {\FLRW} model is very good for very large scales, where we can completely ignore any inhomogeneity and anisotropy in the distribution of matter. But in reality the universe is not homogeneous and isotropic at all scales. This is why physicists studied more general solutions, like those obtained by dropping the condition of isotropy and maintaining homogeneity, such as various Bianchi cosmological models \cite{plebanski2006grc,stephani2003exactsolutions,ellis1999cosmological}. Here we will explicitly construct a very general spacetime $(M,g)$, which is allowed to be inhomogeneous and anisotropic.

Penrose motivated {\WCH} by the necessity to explain the high homogeneity and isotropy of the early universe. Hence, the importance of our example is that it exhibits vanishing Weyl curvature at the singularity, without imposing homogeneity and isotropy. Our choice is also motivated by the fact that it includes as particular cases {\FLRW} spacetimes, isotropic Big-Bang solutions, and some singularities occurring in Bianchi cosmologies, but it is much more general. In this section, we will show that this explicit example of metric is {\quasireg}.

We assume that $M$ has the form $M=I\times \Sigma$, where $I\subseteq \R$ is an interval, and $\Sigma$ a three-dimensional manifold. We consider a global time coordinate $\tau:M\to I$, defined by $\tau(t,x)=t$. We consider on each slice $\Sigma_t=\tau^{-1}(t)$ a Riemannian (non-degenerate) metric $h_{ij}(t,x)$ (where $1\leq i,j\leq 3$), which depends smoothly on $(t,x)\in I\times\Sigma$. We also represent the metric $h(t)$ as an arc element by $\de\sigma_t$. We assume further that the metric $g$ on the manifold $M$ has the form
\begin{equation}
	g_{ij}(t,x):=-\de t\otimes\de t + a^2(t)h_{ij}(t,x),
\end{equation}
where $a:I\to\R$ is a smooth function. While $h(t,x)$ is Riemannian (and {\nondeg}) for any $(t,x)\in I\times\Sigma$, $a(t)$ is allowed to vanish. The Big-Bang singularity is therefore obtained for $a(t)=0$.

On short, we consider that the spacetime is $M=I\times\Sigma$, with the following metric:
\begin{equation}
\label{eq_metric_a_h}
\de s^2 = -\de t^2 + a^2(t)\de\sigma_t^2.
\end{equation}

If we let $h_{ij}$ depend on time, but impose various symmetry conditions, we can obtain various Bianchi cosmologies (see \eg \cite{stephani2003exactsolutions}). If we take $h_{ij}$ to be independent on time, and of constant curvature, we obtain the {\FLRW} model (which is not interesting from the viewpoint of the \WCH, because its Weyl curvature vanishes trivially). But if we allow the geometry of space slices $\(\Sigma_t,h(t)\)$ to be inhomogeneous and to vary with time, the solutions are much more general.

We can actually be a bit more general and allow the metric on $I$ to become degenerate too:
\begin{equation}
\label{eq_metric_a_Nh}
\de s^2 = -N^2(t)\de t^2 + a^2(t)\de\sigma_t^2,
\end{equation}
where $N:I\to\R$ is a smooth function. If $N(t)\neq 0$ for any $t\in I$, then a reparametrization of $I$ allows the metric to be of the form \eqref{eq_metric_a_h}, so this generalization is important only when $N(t)$ vanishes together with $a(t)$. For reasons which will become apparent, we will require that
\begin{equation}
\label{eq_f_aN}
	f(t):=\dsfrac{a(t)}{N(t)}
\end{equation}
is not singular. For example, if $f(t)=1$ (hence $N(t)=a(t)$), the resulting singularities are just isotropic singularities, as those studied by Tod \& \textit{al.}. But allowing $f(t)$ to vanish together with $a(t)$ leads to more general, anisotropic singularities.

We are here interested in the most general case.

\begin{theorem}
\label{thm_metric_a_Nh}
The metric \eqref{eq_metric_a_Nh}, satisfying \eqref{eq_f_aN}, is {\quasireg}.
\end{theorem}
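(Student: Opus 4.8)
The plan is to work in the product coordinates $(t,x^1,x^2,x^3)$ on $M=I\times\Sigma$ and to verify the two defining properties of Definition \ref{def_quasi_regular} in turn: first that the metric \eqref{eq_metric_a_Nh} is {\semireg}, and then that the tensors $S_{abcd}$ and $E_{abcd}$ extend smoothly across the singular locus $a(t)=0$. With overdots denoting $\partial_t$, the metric has components $g_{00}=-N^2$, $g_{ij}=a^2h_{ij}$, $g_{0i}=0$, so that on the {\nondeg} locus $g^{00}=-N^{-2}$, $g^{ij}=a^{-2}h^{ij}$, $g^{0i}=0$. Since the coordinate brackets vanish, the Koszul object reduces to the Christoffel symbols of the first kind $\Gamma_{abc}=\kosz(\partial_a,\partial_b,\partial_c)=\tfrac12(\partial_a g_{bc}+\partial_b g_{ca}-\partial_c g_{ab})$, and I would record them explicitly: $\Gamma_{000}=-N\dot N$, $\Gamma_{ij0}=-a\dot a\,h_{ij}-\tfrac12 a^2\dot h_{ij}$, $\Gamma_{0ij}=\Gamma_{i0j}=a\dot a\,h_{ij}+\tfrac12 a^2\dot h_{ij}$, $\Gamma_{ijk}=a^2\gamma_{ijk}$ with $\gamma_{ijk}$ the Christoffel symbols of $h$, all other components vanishing. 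These are manifestly smooth. It is worth noting that \eqref{eq_metric_a_Nh} equals $N^2\bigl(-\de t^2+f^2\de\sigma_t^2\bigr)$, so that the function $f$ of \eqref{eq_f_aN} is exactly the warping left after the conformal factor $N^2$; this explains why the hypothesis is the right one.

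To show \eqref{eq_metric_a_Nh} is {\semireg}, the first condition is immediate: wherever $g$ degenerates (which, since $f$ is finite, happens exactly where $a=0$) every Christoffel symbol carrying a radical index vanishes there, since each of the symbols above carries a positive power of $a$ or of $N$. The substance is the second condition, the smoothness of the contraction $\kosz(X,Y,\cocontr)\kosz(Z,T,\cocontr)$, which on the {\nondeg} locus reads
\[
g^{ef}\Gamma_{ace}\Gamma_{bdf}=-\frac{1}{N^2}\,\Gamma_{ac0}\Gamma_{bd0}+\frac{1}{a^2}\,h^{kl}\Gamma_{ack}\Gamma_{bdl}.
\]
The spatial piece is harmless, because every $\Gamma$ with a spatial last index carries at least one factor $a$, so the product supplies the $a^2$ needed to cancel $a^{-2}$. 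In the timelike piece, $\Gamma_{000}=-N\dot N$ carries a factor $N$ and contributes the smooth $-\dot N^2$, whereas $\Gamma_{ij0}$ carries a factor $a$, so that $N^{-2}\Gamma_{ij0}\Gamma_{kl0}$ is a smooth quantity times $(a/N)^2=f^2$. This is finite precisely because $f$ is non-singular by \eqref{eq_f_aN}; this is where the hypothesis first enters. Hence \eqref{eq_metric_a_Nh} is {\semireg}, and therefore, by the construction recalled in \sref{s_sreg}, its Riemann tensor $R_{abcd}$ defined by \eqref{eq_riemann_curvature_tensor_coord} is smooth.

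It remains to establish that the metric is {\quasireg}, which is the delicate part. By \eqref{eq_weyl_curvature} and Definition \ref{def_quasi_regular} I must show that
\[
S_{abcd}=\frac{R}{12}\bigl(g_{ac}g_{bd}-g_{ad}g_{bc}\bigr),\qquad E_{abcd}=\tfrac12\,(S\circ g)_{abcd}
\]
extend smoothly across $a=0$, where on the right the two-tensor $S_{ab}$ is as in \eqref{eq_ricci_traceless} and $\circ$ is the Kulkarni--Nomizu product \eqref{eq_kulkarni_nomizu}. The obstruction is that the invariants $R_{ab}=g^{st}R_{asbt}$ and $R=g^{ab}R_{ab}$ entering these formulas each carry factors $a^{-2}$ and $N^{-2}$ and so blow up at the singularity; being {\quasireg} asserts that the particular combinations in which they occur cancel these divergences. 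I would therefore assemble the components of $R_{abcd}$ from the symbols above, contract to $R_{ab}$ and $R$ while tracking the exact powers of $a$ and $N$ each inverse-metric factor produces, and then check that the metric factors hidden inside $(g\circ g)_{abcd}$ and inside $(S\circ g)_{abcd}$ restore finiteness. Exactly as in the {\semireg} step, the divergences not already killed by powers of $a$ are the $N^{-2}$ ones, and these are again tamed by the boundedness of $f=a/N$. This accounting of competing vanishing and blow-up rates is the main obstacle; once it is carried out, $S_{abcd}$ and $E_{abcd}$ extend smoothly and \eqref{eq_metric_a_Nh} is {\quasireg}.
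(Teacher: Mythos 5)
Your first half (semi-regularity) is correct and follows the same route as the paper: you compute the Christoffel symbols of the first kind, observe that every symbol with a spatial last index carries at least one factor of $a$ and that $\Gamma_{000}$ carries a factor of $N$, and conclude that the contractions $g^{ef}\Gamma_{ace}\Gamma_{bdf}$ are smooth because the only surviving dangerous ratio is $a^2/N^2=f^2$ (or $a/N=f$), which is smooth by hypothesis \eqref{eq_f_aN}. This matches the paper's equations for $g_{ab,\cocontr}g_{cd,\cocontr}$ essentially verbatim.

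The second half, however, is where the theorem actually lives, and there you stop at a declaration of intent: ``I would assemble the components \ldots and then check that the metric factors \ldots restore finiteness \ldots once it is carried out, $S_{abcd}$ and $E_{abcd}$ extend smoothly.'' That accounting is precisely the content of quasi-regularity and cannot be waved through, because your summary of the mechanism --- that the residual divergences ``are again tamed by the boundedness of $f=a/N$'' --- is not the whole story and, taken literally, would fail for some terms. For example, $R_{0i0j}$ contains $g_{ij,00}=2\dot a^2h_{ij}+\cdots$, which does \emph{not} vanish at $a=0$ in general; contracting with $g^{00}=-N^{-2}$ produces an $N^{-2}$ that is \emph{not} cancelled by any power of $a$ or by $f$. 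It is rescued only because in $(\Ric\circ g)_{abcd}$ this term is multiplied by an extra factor $g_{pq}=N^2\tilde g_{pq}$ coming from the Kulkarni--Nomizu product, and in $R\,(g\circ g)_{abcd}$ by the factor $N^4f^2q_{abcd}$ of $g\circ g$. Other terms (e.g.\ $g_{00,00}$ contracted with $g^{ij}=a^{-2}h^{ij}$) are instead rescued by the $a^2h_{pq}$ in the product, leaving an $f^2$. The paper carries out this case-by-case bookkeeping explicitly (its Tables \ref{tab_kn_ricci} and \ref{tab_kn_scalar}), and each case uses a different cancellation. A related point you should make explicit: since $R_{ab}$ and $R$ individually blow up at $a=0$, the verification must be performed on the products $\Ric\circ g$ and $R\,g\circ g$ as wholes (computed on the dense nondegenerate set and extended), not on $R_{ab}$ and $R$ separately. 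Until that term-by-term check is actually written out, the proof is incomplete at its decisive step.
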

\begin{proof}
The plan is to prove that the metric is {\semireg}, by showing that the terms in the Riemann curvature tensor \eqref{eq_riemann_curvature_tensor_coord} are smooth. Then we show that the Ricci decomposition
\begin{equation}
\label{eq_ricci_decomposition}
	R_{abcd} = E_{abcd} + S_{abcd} + C_{abcd}.
\end{equation}
is smooth.

The metric components are
\begin{equation}
	g(t,x) =
	\left(
\begin{array}{cc}
	-N^2(t) & 0 \\
	0 &  a^2(t) h_{ij}(t,x) \\
\end{array}
\right)
\end{equation}

The reciprocal metric components are
\begin{equation}
	g^{-1}(t,x) =
	\left(
\begin{array}{cc}
	-N^{-2}(t) & 0 \\
	0 &  a^{-2}(t) h^{ij}(t,x) \\
\end{array}
\right)
\end{equation}

The partial derivatives of the metric tensor are therefore
\begin{equation}
\label{eq_g_der}
\begin{array}{ll}
	g_{00,0} &= -2N\dot N \\
	g_{00,k} &= 0 \\
	g_{ij,0} &= a(2\dot a h_{ij} + a \dot h_{ij}) \\
	g_{ij,k} &= a^2 \partial_k h_{ij} \\
\end{array}
\end{equation}

The second order partial derivatives of the metric tensor are

\begin{equation}
\label{eq_g_der_der}
\begin{array}{ll}
	g_{00,00} &= -2\(\dot N^2 + N\ddot N\) \\
	g_{00,k0} &= g_{00,0k} = g_{00,kl} =  0 \\
	g_{ij,00} &= 2\dot a^2 h_{ij} + 2a\ddot a h_{ij}  + 4a\dot a \dot h_{ij} + a^2 \ddot h_{ij} \\
	g_{ij,k0} &= a\(2 \dot a \partial_k h_{ij} + a \partial_k \dot h_{ij}\) \\
	g_{ij,kl} &= a^2 \partial_k \partial_l h_{ij}\\
\end{array}
\end{equation}

To check that $g$ is {\semireg}, it is enough to check that the terms of the form $g_{ab,\cocontr}g_{cd,\cocontr}$ are smooth. By using the equations \eqref{eq_g_der} we find that

\begin{equation}
\begin{array}{lll}
	g_{00,\cocontr}g_{00,\cocontr} &=& -N^{-2}g_{00,0}g_{00,0} + a^{-2}h^{cd}g_{00,c}g_{00,d} \\
	&=& -4\dot N^2 \\
\end{array}
\end{equation}

\begin{equation}
\begin{array}{lll}
	g_{00,\cocontr}g_{ij,\cocontr} &=& -N^{-2}g_{00,0}g_{ij,0} + a^{-2}h^{cd}g_{00,c}g_{ij,d} \\
	&=& 2\dsfrac{\dot N a}{N}\(2\dot a h_{ij} + a\dot h_{ij}\)  \\
	\end{array}
\end{equation}

\begin{equation}
\begin{array}{lll}
	g_{ij,\cocontr}g_{kl,\cocontr} &=& -N^{-2}g_{ij,0}g_{kl,0} + a^{-2}h^{cd}g_{ij,c}g_{kl,d} \\
	&=& -\dsfrac{a^2}{N^2}(2\dot a h_{ij} + a \dot h_{ij})(2\dot a h_{kl} + a \dot h_{kl})
	+ a^2h^{cd}\partial_c h_{ij}\partial_d h_{kl} \\
	\end{array}
\end{equation}

But since for a smooth function $f(t,x)$
\begin{equation}
a(t,x) = f(t,x)N(t),
\end{equation}
the terms calculated above become now manifestly smooth:
\begin{equation}
\label{eq_g_cocontr}
\begin{array}{ll}
	g_{00,\cocontr}g_{00,\cocontr} &= -4\dot N^2 \\
	g_{00,\cocontr}g_{ij,\cocontr} &= 2\dot N f \(2\dot a h_{ij} + a\dot h_{ij}\)  \\
	g_{ij,\cocontr}g_{kl,\cocontr} &= f^2\( - (2\dot a h_{ij} + a \dot h_{ij})(2\dot a h_{kl} + a \dot h_{kl})
		+ N^2h^{cd}\partial_c h_{ij}\partial_d h_{kl}\)\\
\end{array}
\end{equation}
Therefore the metric $g$ is {\semireg}.

We are now interested in proving that $\Ric\circ g$ and $R g \circ g$ are smooth. For this, we have to contract the terms from \eqref{eq_g_der_der} and \eqref{eq_g_cocontr}, and see what happens when taking Kulkarni-Nomizu products with $g$. The tensor $\Ric\circ g$ is a sum of products between $g_{ab,cd}$ or $g_{ab,\cocontr}g_{cd,\cocontr}$, and $g^{ef}$, and $g_{gh}$. The tensor $R g \circ g$ is a sum of products between $g_{ab,cd}$ or $g_{ab,\cocontr}g_{cd,\cocontr}$, $g^{ef}g_{gh}$, and $g\circ g$, which is of the form $N^4(t)f^2(t)q_{abcd}$. These products are resumed in Tables \ref{tab_kn_ricci} and \ref{tab_kn_scalar}, where we can see that they are smooth. In writing the tables, we made use of the following facts:
\begin{itemize}
	\item 
only some particular combinations of indices are allowed when contracting, and in the Kulkarni-Nomizu products,
	\item 
$g\circ g$ is of the form $N^4f^2q_{abcd}$, with $q_{abcd}$ smooth,
	\item 
$g_{ab}=N^2\tilde{g}_{ab}$, where $\tilde{g}_{ab}$ is smooth; $g^{ab}=N^{-2}f^{-2}\hat{g}^{ab}$, with $\hat{g}^{ab}$ smooth
	\item 
$g_{ij}=a^2h_{ij}=f^2N^2h_{ij}$; $g^{ij}=a^{-2}h^{ij}=f^{-2}N^{-2}h^{ij}$.
\end{itemize}

\begin{table}[!htbp]
\centering
\caption{Terms from $\Ric\circ g$.}
\label{tab_kn_ricci}
\begin{tabular}{|c|c|c|c|}
\toprule
Term &  \multicolumn{2}{c|}{Multiply with} & Term from $\Ric\circ g$ \\
\midrule
$u_{abcd}$ &  $g^{ef}$ & $g_{pq}$ & $u_{abcd} g^{ef} g_{pq}$ \\
{} &  ${}_{\{e,f\}\subset\{a,b,c,d\}}$ & ${}_{\{p,q\}\neq \{a,b,c,d\}-\{e,f\}}$ & {} \\\toprule
$g_{00,00}$ or $g_{00,\cocontr}g_{00,\cocontr}$ & $-N^{-2}$ & $a^2h_{ij}$ &  $-u_{0000}f^2h_{ij}$ \\\midrule
$g_{ij,kl}=a^2 \partial_k \partial_l h_{ij}$ & $a^{-2}h^{ef}$ & $g_{pq}$ & $h^{ef}\partial_k \partial_l h_{ij}g_{pq}$ \\\midrule
$g_{ij,\cocontr}g_{kl,\cocontr}=f^2 \tilde{u}_{ijkl}$ & $f^{-2}N^{-2}h^{ef}$ & $N^2\tilde{g}_{pq}$ & $\tilde{u}_{ijkl}h^{ef}\tilde{g}_{pq}$ \\\midrule
$g_{ij,00}$ or $g_{ij,\cocontr}g_{00,\cocontr}$ & $g^{i0}=0$ & $g_{pq}$ & $0$ \\\midrule
$g_{ij,00}$ or $g_{ij,\cocontr}g_{00,\cocontr}$ & $g^{00}=N^{-2}$ & $N^2\tilde{g}_{pq}$ & $\tilde{g}_{pq}u_{abcd}$ \\\midrule
$g_{ij,00}$ or $g_{ij,\cocontr}g_{00,\cocontr}$ & $g^{ij}=a^{-2}h^{ij}$ & $a^2h_{pq}$ ($p,q\neq 0$) & $h^{ik}h_{pq}u_{abcd}$ \\\midrule
$g_{ij,k0}=a v_{ijk}$ & $g^{ef}=a^{-2}h^{ef}$ & $a^2h_{pq}$ ($p,q\neq 0$) & $a h^{ef}h_{pq} v_{ijk}$ \\\bottomrule
\end{tabular}
\end{table}

\begin{table}[!htbp]
\centering
\caption{Terms from $R g\circ g$.}
\label{tab_kn_scalar}
\begin{tabular}{|c|c|c|c|}
\toprule
Term &  \multicolumn{2}{c|}{Multiply with} & Term from $R(g\circ g)$ \\
\midrule
$u_{abcd}$ &  $g^{ef}g^{gh}$ & $(g\circ g)_{pqrs}$ & $u_{abcd} g^{ef}g^{gh} (g\circ g)_{pqrs}$ \\
{} &  ${}_{\{e,f,g,h\}=\{a,b,c,d\}}$ & {} & {} \\\toprule
$g_{00,00}$ or $g_{00,\cocontr}g_{00,\cocontr}$ & $-N^{-4}$ & $N^4f^2q_{abcd}$ &  $-u_{0000}f^2q_{abcd}$ \\\midrule
$g_{ij,kl}=a^2 \partial_k \partial_l h_{ij}$ & $a^{-4}h^{ef}h^{gh}$ & $N^4f^2q_{abcd}$ & $N^2q_{abcd} h^{ef}h^{gh}\partial_k \partial_l h_{ij}$ \\\midrule
$g_{ij,\cocontr}g_{kl,\cocontr}=f^2 \tilde{u}_{ijkl}$ & $f^{-4}N^{-4}h^{ef}h^{gh}$ & $N^4f^2q_{abcd}$ & $\tilde{u}_{ijkl}h^{ef}h^{gh}q_{abcd}$ \\\midrule
$g_{ij,00}$ or $g_{ij,\cocontr}g_{00,\cocontr}$ & $g^{00}g^{ij}=N^{-4}f^{-2}h^{ij}$ & $N^4f^2q_{abcd}$ & $h^{ij}u_{abcd}q_{abcd}$ \\\midrule
$g_{ij,k0}=a v_{ijk}$ & $0$ & $N^4f^2q_{abcd}$ & $0$ \\\bottomrule
\end{tabular}
\end{table}

This concludes the proof that all terms contained in $R_{abcd}$, $E_{abcd}$, and $S_{abcd}$ are smooth. Hence, the metric \eqref{eq_metric_a_Nh} is {\quasireg}.
\end{proof}

This model can be connected with a cosmological model proposed by P. Fiziev and D. V. Shirkov \cite{FS2012Axial}, which consists in a dimensional reduction of spacetime (in topology and geometry) at the Big-Bang. When $a(t)$ vanishes, the space can be considered, in a way, to shrink to a point, so that one remains only with the time dimension. There are several properties of our model which seem to support that, at least, the universe behaves ``as if'' the dimensionality is reduced: the reduction of the rank of the metric, of the dimension of the cotangent space at such singular points, and by a certain independence of fields with the degenerate directions. In fact, if $N(t)=0$ as well, it appears that the Universe emerged from a dimensionless point. Yet, the conditions in which such a dimensional reduction can be considered are not clear. Moreover, for our {\quasireg} singularities it seems to be, at least at this time, better to maintain four topological dimensions, because such a reduction may, in general, lose initial data.

\section{Conclusion}

The {\semireg} singularities are well-behaved from many viewpoints, allowing us to perform the most important operations which are allowed by {\semiriem} manifolds with regular metric tensor. When, in addition, they are {\quasireg}, we can write a smooth expanded version of Einstein's equation (\sref{s_einstein_exp_qreg}). 

The {\quasireg} singularities offer a nice surprise, since they have vanishing Weyl curvature $C_{abcd}$. It follows that any {\quasireg} Big-Bang singularity also satisfies the {\Wch} (\sref{s_wch_thm}).

As a main application, we studied in \sref{s_wch_ex} a cosmological model which extends {\FLRW} and Bianchy cosmological models, by dropping the isotropy and homogeneity conditions. This generality is more realistic from physical viewpoint, since our Universe appears homogeneous and isotropic only at very large scales. This model contains as particular cases, in addition to {\FLRW}, also the isotropic singularities.

\section*{Acknowledgements}
The author thanks P. Fiziev and D. V. Shirkov from Bogoliubov Laboratory of Theoretical Physics, JINR, Dubna, for helpful discussions and advice concerning the form and the content of this paper. The author thanks an anonymous referee for the valuable comments and suggestions to improve the clarity and the quality of this paper.

\pagebreak
\bibliographystyle{unsrt}

\end{document}